\newtheorem{theorem}{Theorem}
\begin{document}
 \title{On the Optimization of Multi-Cloud Virtualized Radio Access Networks}

\author{\IEEEauthorblockN{Fahri Wisnu Murti\IEEEauthorrefmark{1}, Andres Garcia-Saavedra\IEEEauthorrefmark{2}, Xavier Costa-Perez\IEEEauthorrefmark{2}, George Iosifidis\IEEEauthorrefmark{1}}\\
	\vspace{-4.5mm}
	\IEEEauthorblockA{
		\IEEEauthorrefmark{1}School of Computer Science and Statistics, Trinity College Dublin\\
		\IEEEauthorrefmark{2}NEC Laboratories Europe, Heidelberg, Germany}
		\thanks{{This work was supported by Science Foundation Ireland (SFI) under Grant No. 17/CDA/4760. This work has been partially supported by EC H2020 5GPPP 5Growth project (Grant 856709). This publication has emanated from research supported in part by a research grant from SFI and is co-funded under the European Regional Development Fund under Grant Number 13/RC/2077. 
		***This preprint is to appear in \textit{Proc. of IEEE International Conference on Communications (ICC) 2020.}}}%
}

\maketitle
\thispagestyle{plain}
\pagestyle{plain}

\begin{abstract}

We study the important and challenging problem of virtualized radio access network (vRAN) design in its most general form. We develop an optimization framework that decides the number and deployment locations of central/cloud units (CUs); which distributed units (DUs) each of them will serve; the functional split that each BS will implement; and the network paths for routing the traffic to CUs and the network core. Our design criterion is to minimize the operator's expenditures while serving the expected traffic. To this end, we combine a linearization technique with a cutting-planes method in order to expedite the \emph{exact solution} of the formulated problem. We evaluate our framework using real operational networks and system measurements, and follow an exhaustive parameter-sensitivity analysis. We find that the benefits when departing from single-CU deployments can be as high as 30\% for our networks, but these gains diminish with the further addition of CUs. Our work sheds light on the vRAN design from a new angle, highlights the importance of deploying multiple CUs, and offers a rigorous framework for optimizing the costs of Multi-CUs vRAN. 

\end{abstract}
\IEEEpeerreviewmaketitle



\section{Introduction}

\subsection{Motivation}

The Cloud Radio Access Network (C-RAN) has recently emerged as a promising solution for building low-cost high-performance RAN in 5G+ systems. Pooling the base station (BS) functions in central servers within the RAN, reduces their implementation cost and offers unprecedented performance gains. For instance, C-RAN can support Coordinated Multi-Point (CoMP) transmissions, flexible spectrum management, and intelligent interference control \cite{5g_white,ngmn_white}. The last few years both academia and industry have been working to standardize C-RANs and expedite their adoption \cite{checko_cloudran}. However, there are still several issues that need to be addressed before we can fully reap the benefits of these systems. 

One of the key challenges is the design of optimal C-RAN architectures. This remains an open and difficult problem for many reasons. First, it is clear today that fully centralized RANs are not always implementable as they require expensive high-capacity fronthaul networks. Hence, the latest proposals enable operators to determine the centralization level (or, \textit{functional split}), by selecting which BS functions will be hosted at the central/cloud units (CUs) and which will be kept at the distributed units (DUs). The term virtualized RAN (vRAN) has been coined to describe these architectures, which in their most flexible version allow even a different split for each BS \cite{nokia_anyhaul}. However, selecting the level of centralization is an intricate problem, as each functional split creates different computation load for the CUs and DUs, and different data transfer loads for the network.

Furthermore, some vRANs might need to deploy multiple CUs in order to increase the number of centralized BS functions. It is thus necessary to decide the number and deployment location of the CUs, and which DUs each of them should serve. Nevertheless, these \emph{assignment} decisions are affected by the functional split that is selected for each BS. On top of that, while initially the vRAN fronthaul was designed using point-to-point connections (CPRI lines), there is nowadays consensus that these links should be replaced by packet-switched networks where links can be shared by multiple flows. This is more cost-effective, yet it compounds the vRAN design problem as one has, additionally, to decide the \emph{routing} for each BS and each split.

\textbf{Our goal in this work is:} to tackle the vRAN design problem in its most general form by optimizing jointly the number and location of CUs; assignment of DUs to CUs; functional split for each BS; and routing of data in the network.

\subsection{Related Work}

The idea of C-RAN was followed, early-on, by the suggestion for implementing the BS functions in common hardware (cloudification) \cite{yin-c-ran}; and soon after \cite{function_split_bells} analyzed the pros and cons of different splits. A detailed study of the vRAN split specifications can be found in \cite{smallcell}, while \cite{cost_vm, pooling-conext} studied the cost-efficiency gains of splits, and \cite{impact_packet} the effect of packetization. However, there are few works that optimize the split configuration, see \cite{function_split_survey}. The authors in \cite{davit_flex5g} select splits to minimize inter-cell interference and fronthaul load, \cite{adaptive_alba, yahya_adaptive} consider adaptive split configuration, \cite{slicing_ojaghi} studied jointly the splitting and RAN slicing; and our previous work \cite{fluidran_andrea} proposed joint routing and split selection. 

The above works do not consider the availability of multiple CUs, even less so the need to determine their deployment location. However, this is a key step in vRAN design. In \cite{scalable_mharsi}, the authors consider a tree-like structure with pre-deployed CUs and explore, heuristically, minimum-cost splits; while \cite{osama_provisioning} selects also the location of CUs and formulates (but does not solve) a min-cost design problem with fixed (non-optimized) splits. Finally, in our recent work \cite{wizhaul_andrea} we assumed multiple and fixed number of CUs but with predetermined DU-CU assignments. Unlike these interesting works, here we decide the deployment of CUs, determining also their number, and select which DUs each of them will serve and which splits will be realized. This complicates substantially the problem, but tackles a very important and practical design dimension in vRANs. Modeling-wise, the problem is reminiscent to server placement problems \cite{leonidas-TPDS} and VNF chain embeddings \cite{raz-vnf-infocom15}. However, here the placement of functions determines the data volume and transfer delay bounds, and hence which routing options are feasible.


\subsection{Contributions}

We introduce an analytical framework by deciding: the number of CUs, their deployment locations, the DUs that each of them should serve, the split for each BS, and the routing path for each data flow. We formulate a mathematical problem that optimizes jointly these decisions by using a measurement-based 3GPP-compliant system model. The objective is to minimize the vRAN operational expenditures, including function implementation and data routing costs. Our framework is general enough and can be tailored to different system architectures or include other objective criteria such as the desirable centralization level (see \cite{wizhaul_andrea} for definition).

The resulting formulation is a rich mixed-integer quadratic linear problem (MIQLP), with prohibitive complexity in case of large networks and multiple CU locations. Therefore we propose a novel two-stage solution process. First, we transform it to an MILP by linearizing its constraints \cite{linearization_gupte}, and then employ an intelligent cutting-planes method based on the seminal Benders' decomposition technique \cite{benders} to find an exact solution. This is crucial since in such network design problems any optimality gap will induce multiplicative (with time) performance and cost losses. 

We evaluate our framework using measurements for the system parameters and operational RAN topologies taken from \cite{fluidran_andrea}. Following an extensive parameter-sensitivity analysis we find that the benefits when departing from single-CU deployments can be as high as 30\%, when using 8 CUs, or 20\%, for 4 CUs in these systems, but these gains diminish as we add more CUs. This reveals a threshold effect, and the need to optimize the placement decisions in order to avoid excessive deployment costs. Finally, we find that a multi-CUs vRAN can support higher level of centralization (more functions at the CUs) even for high routing costs and/or DU traffic loads.




\section{Preliminaries and Model}

\textbf{Background}. 3GPP introduced eight different splits but the main standardized ones are as follows \cite{function_split_bells, adaptive_alba}:
\begin{itemize}[leftmargin=3mm]
    \item Split 1 (S1; PDCP-RLC): RRC, PDCP, and upper layers deployed at CU; RLC, MAC, and PHY at the DU. 
    \item Split 2 (S2; MAC-PHY): MAC and upper layers at CU; PHY and RF layers at the DU. 
    \item Split 3 (S3; PHY-RF): All functions at CU except RF layers.
\end{itemize}
Going from S1 to S3, more functionalities are centralized, which increases the cost savings (economics of scale) and performance gains, e.g., S1 supports CoMP and effective MIMO implementation. However, centralizing more functions increases the volume of data that needs to be transferred to CU (can reach higher than 2.5Gbps in S3), and imposes tighter transfer delay constraints (0.25msec in S3), see \cite{split_3gpp}. 


\begin{figure}
	\centering
	\includegraphics[width=0.45 \textwidth]{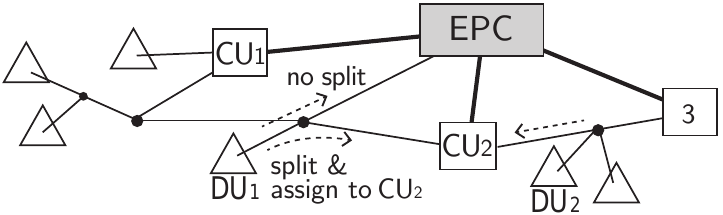}   
	\caption{\small The BS functions can be split between the DU and one of the CUs. DU-$1$ can implement a full-stack BS (no split) and route its traffic directly to EPC; or select a split and send its traffic to CU-$2$ which further routes it to EPC. CUs have high-capacity links to EPC; and some CUs might not be activated (e.g., 3). }
	\label{fig:model}
\end{figure}

The BS implements a function chain $f_0\rightarrow f_1\rightarrow f_2\rightarrow f_3$ \cite{smallcell}, where $f_0$ encapsulates the RF-related operations (e.g., A/D sampling) and is always placed at DUs, while $f_3$ is associated with PDCP and upper layers. Functions $f_1$ (PHY) and $f_2$ (RLC and MAC) can be deployed either in CUs or DUs depending on the RAN configuration. The current standardization efforts suggest a packet-based network with links that are shared by the DUs, instead of the point-to-point expensive CPRI links. Following the latest proposals, we assume that the CUs are directly connected to the network core through high-capacity (e.g., optical) links; see Fig. \ref{fig:model} for an example of our system.


\textbf{Network}. The RAN is modeled with a graph $G = (\mathcal{I,E})$ where the set of nodes $\mathcal I$ includes the subsets: $\mathcal{N}$ of $N$ DUs, $\mathcal{L}$ of $L$ routers, and $\mathcal{M}$ of $M$ possible locations for the CUs ($M << N$); and the core node (EPC) that we index with $0$. We also define $\mathcal{M}_0\!=\!\mathcal{M}\cup \{0\}$. The nodes are connected with the $\mathcal{E}$ links and each link $(i,j)$ has average data transfer capacity of $c_{ij}$ (b/sec). DU-$n$ is connected to each CU-$m$ with paths $\mathcal P_{nm}$, and to EPC with paths $\mathcal P_{n0}$. We define the set $\mathcal{P}_m = \cup_{n=1}^N \mathcal{P}_{nm} $ of paths connecting all DUs to CU-$m$; and the set of all paths $\mathcal{P}= \cup_{m\in\mathcal{M}_0} \mathcal{P}_{m}$. Each path $p_k$ introduces end-to-end delay of $d_{p_k}$ secs. The BS functions are implemented in servers using virtual machines (VMs). We denote with $H_n$ and $H_m$ (cycles/s) the processing capacity of DU-$n$ and CU-$m$, respectively; and define as $\rho_1$ and $\rho_2$ (cycles/Mb/s) the processing load of functions $f_1$ and $f_2$ (per unit of traffic). 

\textbf{Demand \& Cost}. Without loss of generality, we focus on uplink. The users served by each DU-$n$, $n\in\mathcal N$, generate requests following an i.i.d random process $\{ \Phi_n (t) \}^{\infty}_{t=1}$ that is uniformly upper-bounded by $\phi_n$. This creates a data flow $\lambda_n\!=\!\lambda \phi_n$, where $\lambda$ models the bytes per request. Hence, the RAN needs to admit, route and serve $N$ different flows. The function execution is more cost-effective at the CUs \cite{cost_vm}. Therefore, we denote with $a_m$ and $b_m$ the computing and instantiating VM cost (monetary units/cycle) at CU-$m$, and define vectors $\bm{a}=(a_m, m\in\mathcal M)$, $\bm{b}=(b_m, m\in\mathcal M)$. The respective costs for the $N$ DUs are $\bm{\alpha}\!=\!(\alpha_n, n\in\mathcal N)$ and $\bm{\beta}\!=\!(\beta_n, n\in\mathcal N)$. Finally, $\zeta_k$ is the routing cost (monetary units/byte) for each path $p_k\in \mathcal P$, and we define $\bm{\zeta }\!=\!(\zeta_1, \ldots, \zeta_{|\mathcal{P}|})$. Such costs might arise because the links are leased by third-parties, or they can model the (average) operating expenditures of the network. The flows are routed to CUs and then to the EPC through high-capacity links, or directly from the DUs to the EPC in case of no splitting.

\textbf{Problem Definition}. Our objective is to minimize the network operation cost while serving the user traffic, by jointly optimize the following decisions: 
\begin{enumerate}[leftmargin=5mm]
    \item \textit{Deployment}: which of the available CU locations to use?	
    \item \textit{Assignment}: which CU should serve each (split) DU?
    \item \textit{Placement}: where to deploy $f_1,f_2,f_3$ for each BS?    
    \item \textit{Routing}: How to route the data from the DUs to CUs? 
\end{enumerate}
These decisions are inherently coupled and this raises interesting trade-offs. Placing the functions at the DUs reduces the routing cost but increases the computing cost due to inefficient DU servers. On the other hand, centralizing the functions reduces computing costs at the expense of higher routing expenditures, while the deep splits restrict the routing options due to their tight delay bounds. The CU assignment decisions affect directly the routing costs and the number of available paths. Finally, the links are shared, hence the above decisions are coupled across the different DUs. 

\section{Problem Formulation} \label{sec:vran}

\textbf{Function Placement}. We denote with $x_{1n}, x_{2n}\! \in\! \{ 0,1 \}$ the decisions for deploying $f_1$ and $f_2$, respectively, to DU-$n$. Similarly, $y_{1nm}, y_{2nm} \in \{ 0,1 \}$ decide the deployment of these functions to CU-$m$. We define the function placement vectors as $\bm{x}_1 = \left( x_{1n}: n \in \mathcal{N} \right),\bm{x}_2 = \left( x_{2n}: n \in \mathcal{N} \right)$ for all DUs, and $ \bm{y}_{1m} = \left( y_{1n,m}: n \in \mathcal{N} \right), \bm{y}_{2m} = \left( y_{2n,m}: n \in \mathcal{N} \right)$ for CU-$m$. We further define $\bm{y}_1\!=\!\left( \bm{y}_{1m} : m \in \mathcal{M} \right)$ for $f_1$, $\bm{y}_2 = \left( \bm{y}_{2m} : m \in \mathcal{M} \right)$ for $f_2$, and $\bm{x}= \left( \bm{x}_1; \bm{x}_2 \right)$, $\bm{y}= \left( \bm{y}_1; \bm{y}_2 \right)$. 

The function placements are coupled. Namely, $f_1$ cannot be deployed at a CU unless $f_2$ is also placed there; while $f_2$ can be deployed at an DU only after $f_1$ \cite{mec_andrea,fluidran_andrea,smallcell}, i.e.,
\begin{align} 
y_{1nm} &\leq y_{2nm},  \quad \forall n \in \mathcal{N},  m \in \mathcal{M},\label{eq:chain1} \\
x_{2n} &\leq x_{1n}, \quad \,\,\,\,\forall n \in \mathcal{N}. \label{eq:chain2}
\end{align}
Also, duplicate deployments should be prevented, i.e.,
\begin{equation} \label{eq:noduplicate1}
x_{1n} + \sum_{m \in \mathcal{M}} y_{1nm} = 1,\,\,\,\,
x_{2n} + \sum_{m \in \mathcal{M}} y_{2nm} = 1, \  \forall n \in \mathcal{N}.
\end{equation}
%
Finally, the placements need to ensure that the computing capacity at each location is satisfied, hence:
\begin{align}
&\!\lambda_n \big(x_{1n} \rho_1 + x_{2n} \rho_2 + \rho_3(1-\sum_{m} z_{nm}) \big) \leq H_{n},\forall n \in \mathcal{N}, \label{eq:constraint_max_ru} \\
&\sum_{n \in \mathcal{N}} \lambda_n (y_{1nm} \rho_1 + y_{2nm} \rho_2+z_{nm} \rho_3) \leq H_{m}, \forall m \in \mathcal{M}. \label{eq:constraint_max_cu}
\end{align}

\textbf{CU Deployment}. We assign DU-$n$ to CU-$m$ using the binary variable $z_{nm}$, where $z_{nm}=1$ if at least one function of BS-$n$ is deployed at that CU. The assignment matrix is 
$ \bm{z} = (z_{nm} \in \{ 0, 1 \} : \forall n \in \mathcal{N}, \forall m \in \mathcal{M}).$
We model the configuration where $f_1$ and $f_2$ are at the DU-$n$, but $f_3$ at the CU-$m$, by setting $y_{1m}=y_{2m}=1-z_{nm}=0$, hence we do not define explicit variables for $f_3$. Without loss of generality, each DU can be assigned at most to one CU:
\begin{equation} \label{eq:z=1}
\sum_{m \in \mathcal{M}} z_{nm} \leq 1, \forall n \in \mathcal{N}, 
\end{equation}
\begin{equation} \label{eq:z>y}
\text{and}\quad     z_{nm} \geq y_{2nm}, \  \forall n \in \mathcal{N}, \forall m \in \mathcal{M},
\end{equation}
must hold to preserve the function deployment ordering.

\textbf{Data Routing}. Variable $r^{k}_{nm}$ (Mb/s) decides the DU-$n$ traffic routed over path $p_k \in \mathcal{P}_{nm}$ to CU-$m$, and we define $ \bm{r} = \big( r^{k}_{nm}\in\mathbb R_+: \ \forall p_k \in \mathcal{P}_{nm}, n \in \mathcal{N}, m \in \mathcal{M}_0 \big).$
The routing decisions must respect the link capacities 
\begin{equation}
\sum_{n \in \mathcal{N}} \sum_{p_k \in \mathcal{P}_n} r^{k}_{nm}  I_{ij}^{k} \leq c_{ij}, \ \ \forall (i,j) \in \mathcal{E}  \label{eq:link_capacity}
\end{equation}
where it is $I_{ij}^k\!=\!1$ if link $(i,j)$ is included in path $p_k$; and also have to satisfy the flow and delay requirements of the splits: 
\begin{equation} \label{eq:route_assign}
\sum_{p_k \in \mathcal{P}_{nm}} r^{k}_{nm} = z_{nm} S_{n}(x_{1n}, x_{2n}), \
    \forall n \in \mathcal{N}, \forall m \in \mathcal{M},
\end{equation}
where $S_{n}(x_{1n}, x_{2n})$ is the data flow (Mb/s) from DU-$n$ which is determined by the split and user traffic of BS-$n$,
\begin{equation}
	\begin{split} \notag
		S_{n}(x_{1n}, x_{2n}) &= x_{1n}(1.02 \lambda_n+1.5) \\ &  - x_{2n} (0.02 \lambda_n+1.5) + 2500  (1-x_{1n}).
	\end{split}
\end{equation}
\eqref{eq:route_assign} ensures there is no data flow from DU-$n$ to CU-$m$ unless the BS-$m$ $f_3$ is placed at that CU, namely $n$ is assigned to $m$. This captures nicely the interaction between assignment and routing, but creates a quadratic constraint term. 

Note that in case of fully decentralized BSs ($f_3$ at the DUs), the flow needs to be routed directly to the core:
\begin{equation} \label{eq:route_assign2}
	\sum_{p_k \in \mathcal{P}_{n0}} r^{k}_{n0} = \left(1- \sum_{m \in \mathcal{M}} z_{nm}\right) \lambda_n, \
	\forall n \in \mathcal{N}.
\end{equation}

\textbf{Delay}. Variable $r^{k}_{nm}$ has to satisfy the delay requirements of the selected BS split \cite{smallcell}. To enforce this, we first classify the paths into three categories: $\mathcal{P}_{nm}^A \subseteq \mathcal{P}_{nm}$ with delay larger than 30 ms; the paths $\mathcal{P}_{nm}^B \subseteq \mathcal{P}_{nm}$ with delay larger than 2ms; and paths $\mathcal{P}_{nm}^C \subseteq \mathcal{P}_{nm}$ with delay larger than 0.25ms. Clearly, it holds $\mathcal{P}_{nm}^A \subseteq \mathcal{P}_{nm}^B \subseteq \mathcal{P}_{nm}^C$. We can ensure that only the eligible for each split paths are selected by using the constraints:
\begin{align} 
\sum_{p_k \in \mathcal{P}^A_{nm}}\!\! r^{k}_{nm} &\leq T (y_{1nm} +  y_{2nm}), \forall n \in \mathcal{N}, m \in \mathcal{M},  \label{eq:delay1} \\
\sum_{p_k \in \mathcal{P}^B_{nm}}\!\! r^{k}_{nm} &\leq T (1-y_{1nm}+y_{2nm}), \forall n \in \mathcal{N}, m \in \mathcal{M}, \label{eq:delay2} \\
\sum_{ p_k \in \mathcal{P}^C_{nm}}\!\! r^{k}_{nm} &\leq T (2-y_{1nm}+y_{2nm}), \forall n \in \mathcal{N}, m \in \mathcal{M}, \label{eq:delay3}
\end{align}
where $T\!>>\!0$ is used to enforce the logical coupling of the delay requirements of the splits and the eligible paths. 

\textbf{Objective}. We wish to minimize the deployment, computation and routing costs. The computation cost for DU-$n$ is 
\begin{align} \label{eq:cost_ru}
&V_n (\bm{x_1},\bm{x_2},\bm{z}) \!=\!\alpha_n \big(x_{1n} + x_{2n} + (1- \sum_{m \in \mathcal{M}} z_{nm}) \big) \notag \\
&+ (\beta_n \lambda_n) \big(\rho_{1} x_{1n}+\rho_{2}x_{2n}+ \rho_{3}(1- \sum_{m \in \mathcal{M}} z_{nm}) \big),
\end{align}
and the respective cost for CU-$m$ is 
%
\begin{align}
	&V_m (\bm{y}_{1m}, \bm{y}_{2m}, \bm{z}_{m})\!=\! b_m \sum_{n \in \mathcal{N}} \lambda_n \left( \rho_1 y_{1nm} +  \rho_2 y_{2nm} + \rho_3 z_{nm}\right)  \notag \\
	&+ a_m \sum_{n \in \mathcal{N}} (y_{1nm} + y_{2nm} + z_{nm})+\omega_m \sum_{n\in\mathcal{N}} z_{nm}\lambda_n.\label{eq:cost_cu}
\end{align}
where  $\omega_m$ is the cost for using the CU and routing the data from there to the EPC. Note that when it holds $\sum_nz_{nm}\!=\!0$ it means that CU-$m$ is not used. The cost of routing data from the all DUs to CU-$m$ is:
\begin{equation} \label{eq:cost_routing}
U_m(\bm{r}_{m}) = \sum_{n \in \mathcal{N}} \sum_{p_k \in \mathcal{P}_{nm}} \zeta_{k} r_{nm}^{k}.
\end{equation}

Putting the above together, we can introduce the mathematical program $\mathbb{P}_1$ that minimizes the vRAN cost:
\begin{mdframed}
\vspace{-2mm}
\begin{align}
\!\mathbb P_1\!:\!\underset{\bm{r}, \bm{x},\bm{y},\bm{z}}{\text{min}} \  
&\sum_{n\in \mathcal{N}} V_n(\bm{x}_1,\bm{x}_2,\bm{z})
\!+\! \sum_{m\in \mathcal{M}} V_m(\bm{y}_{1m}, \bm{y}_{2m}, \bm{z}_{m})\notag \\
& \!+\!\sum_{m \in \mathcal{M}_0} U_{m}(\bm{r}_{m}) \notag \\
\text{s.t.} \ \ & \eqref{eq:chain1} - \eqref{eq:delay3}\nonumber 
\end{align}
\vspace{-2mm}
\end{mdframed}
$\mathbb P_1$ includes integer, continuous variables, and constraints \eqref{eq:route_assign} involve quadratic terms (variables multiplications). Hence, it is a challenging mixed-integer quadratic program (MIQCP)\footnote{Reduction from the multidimensional-knapsack problem; see also \cite{fluidran_andrea}.}.


\section{Solution Framework}
We follow a two-stage solution approach. First, we reformulate $\mathbb P_1$ using a linearization technique that replaces the intricate constraint \eqref{eq:route_assign}. Then, we decompose the problem and employ an efficient cutting-planes method that expedites the solution and finds, provably, an \emph{exact optimal} point. 

\vspace{-1mm}
\subsection{Linearization of constraints} \label{subsec:linearization}

The product of two integer variables in \eqref{eq:route_assign} can be represented by introducing auxiliary variables for every pair $(n, m)$: $v_{1n,m} = x_{1n} z_{nm}$ and $ v_{2n,m} = x_{2n}  z_{nm}$,
where belong to set:
\begin{equation*}
\begin{split}
\mathcal{V}\!=&\Big\{\! \bm{v_1}\!:\! v_{1nm} \in \{ 0,1 \},\ \bm{v_2}\!:\! v_{2nm} \in \{ 0,1 \}, n\!\in\!\mathcal N, m\!\in\!\mathcal M  \\ 
&\mid v_{1nm} \leq  x_{1n}; v_{1nm} \leq z_{nm}; \ v_{1nm} \geq x_{1n} + z_{nm}\! -\! 1; \\ 
&v_{2nm} \leq  x_{2n}; \ v_{2nm} \leq z_{nm}; v_{2nm} \geq x_{2n} + z_{nm} - 1 \Big\}  ,  
\end{split}
\end{equation*}
using a reformulation similar to \cite{linearization_gupte}, we define the problem:
\begin{mdframed}
	\vspace{-2mm}
\begin{align}
\mathbb{P}_2:\,\,\,\,\,\underset{\substack{\bm{x},\bm{z}, \bm{v} \in \mathcal{V}, \\ \bm{y}, \bm{r} \succeq 0 }}{ \text{min}}   
& \ J_F(\bm{r}, \bm{x}, \bm{y}, \bm{v}) \notag \\
\text{s.t.}\quad & \eqref{eq:chain1}-\eqref{eq:link_capacity}, \eqref{eq:route_assign2}-\eqref{eq:delay3} \nonumber \\
\sum_{p_k \in \mathcal{P}_{nm}}&\!\!\! r^{k}_{nm} = S_{nm}(v_{1nm},v_{2nm}),\ \forall n\!\in\!\mathcal N, m\!\in\!\mathcal M \notag
\end{align}
\vspace{-2mm}
\end{mdframed}
$J_F$ is the objective of $\mathbb{P}_1$, and we set ${S}_{nm}(v_{1nm},\!v_{2nm})\!=\!z_{nm}S_n(x_{1n},\! x_{2n})$. $\mathbb{P}_2$ is equivalent to problem $\mathbb{P}_1$.

\subsection{Decomposition}
We use the Benders' method \cite{benders}, which decomposes $\mathbb P_2$ to a \textit{Master} sub-problem $\mathbb{P}_{2M}$ that optimizes the binary variables for fixed routing; and to a \emph{Slave} program $\mathbb{P}_{2S}$ that optimizes routing for fixed split and assignment decisions:
\begin{align}
\mathbb{P}_{2S}:\,\,\,\underset{\substack{\bm{r}\succeq \bm{0} } } { \text{min}}   
	& \ J_F(\bm{r}, \bm{\bar x}, \bm{\bar y}, \bm{\bar v}) \notag \\
	\text{s.t.}\quad & \eqref{eq:link_capacity}, \eqref{eq:route_assign2}-\eqref{eq:delay3} \nonumber \\
	\sum_{p_k \in \mathcal{P}_{nm}}&\!\!\!\bar r^{k}_{nm} = S_{nm}(\bar v_{1nm},\bar v_{2nm}),\ \ \forall n\in\mathcal N, m\in\mathcal M. \notag
\end{align}
Following the standard practice in Benders techniques, we will be using the dual Slave problem:
\begin{equation}
\mathbb P_{2SD}:\,\,\,	\max_{ \bm{\pi}}\,\, h(\bm{\pi}, \bm{\bar  x}, \bm{\bar y}, \bm{ \bar z}, \bm{\bar v})\quad \text{s.t.} \quad H^\top\bm{\pi}\preceq \bm{\zeta},
\end{equation}
where $\bm{\pi}$ is the vector of dual variables and matrix $H$ collects the necessary coefficients. 

The Master problem $\mathbb{ P}_{2M}$ optimizes the discrete decisions and a proxy continuous variable $\theta\geq 0$:
\begin{align}
	\underset{\substack{\theta, \bm{x}, \bm{z}, \bm{v} \in \mathcal{V}, \bm{y} }}{ \text{min}}   
	& \ J_F(\bm{\bar r}, \bm{x}, \bm{y}, \bm{v}) +\theta \notag \\
	\text{s.t.}\quad & \eqref{eq:chain1}-\eqref{eq:link_capacity} \nonumber \\
	& h(\bm{\pi}^{\xi}, \bm{x}, \bm{y}, \bm{z}, \bm{v}) \leq \theta,\,\,\forall \bm \pi^\xi \in \mathcal{C}_{O} \label{eq:h1} \\
	& h(\bm{\pi}^{\xi}, \bm{x}, \bm{y}, \bm{z}, \bm{v}) \leq 0,\,\,\forall \bm \pi^\xi \in \mathcal{C}_{F} \label{eq:h2}
\end{align}
where (\ref{eq:h1})-(\ref{eq:h2}) are the optimality and feasibility cuts, respectively, which gradually construct the entire constraint set of $\mathbb{P}_2$. The intuition behind this method is that the optimal solution can be found before a full re-construction is built. 

\subsection{Algorithm}
The detailed process is described in Algorithm 1. It finds the optimal solution iteratively. Firstly, it solves the Master problem ($\mathbb{P}_{2M}$) to find the currently optimal integer decision variables ($\bm{x}^{\tau}, \bm{z}^{\tau}, \bm{v}^{\tau}, \bm{y}^{\tau}$)  and surrogate variable ($\theta^{\tau}$) for every iteration $\tau$ (Step 2). These values are used to set the current lower bound $LB^{(\tau)}$ (Step 3). Then, we can solve ($\mathbb{P}_{2SD}$) and get $\bm{\pi}^{\tau}$ by using $\bm{x}^{\tau}, \bm{z}^{\tau}, \bm{v}^{\tau}, \bm{y}^{\tau}$ (Step 4). Then, using the Master problem, we can obtain a new upper bound (Step 5-7). In every iteration, $\mathcal{C}^O$ (if $\mathbb{P}_{2SD}$ is bounded) and $\mathcal{C}_F$ (if $\mathbb{P}_{2SD}$ is unbounded) are enriched with new cuts (Step 8-12); and will be used to solve $\mathbb{P}_{2M}$ in the next iteration. The steps are repeated until upper and lower bound coincide, and reach optimal solution, $UB-LB \longrightarrow 0$ (Step 14). 

\begin{algorithm}[h]  \caption{Decomposition Algorithm}
	\SetAlgoLined
	\DontPrintSemicolon
	\KwInitialize{ \; $\tau=0;\, \mathcal{C}_O^{(0)}=\mathcal{C}_F^{(0)} =\emptyset;\, UB^{(0)}= -LB^{(0)}>>1;\, \epsilon$ } 
	\Repeat{ $UB^{(\tau)}-LB^{(\tau)}\leq \epsilon$ } 
	{
		Solve $\mathbb P_{2M} (\mathcal{C}^{\tau}_O,\mathcal{C}^{\tau}_F)$ to get $ \theta^{\tau}, \bm{x^{\tau},y^{\tau},z^{\tau},v^{\tau}}$ \;
		$LB^{\tau}\!=\!\sum_{n}\! V_n(\bm{x^{\tau}_1},\bm{x^{\tau}_2},\bm{z^{\tau}}) 
		\!+\! \sum_{m}\! V_m(\bm{y}^{\tau}_{1m}, \bm{y}^{\tau}_{2m}, \bm{z}^{\tau}_{m})\! +\! \theta^{\tau}  $  \;
		Solve $ \mathbb P_{2SD}(\bm{x^{\tau},y^{\tau},z^{\tau},v^{\tau}})$ to obtain $\boldsymbol{\pi}^{\tau}$. \;
		\If{$UB^{\tau} < UB^{\tau-1}$}
		{$ UB^{\tau} \!=\!  \sum_{n} V_n(\bm{x}_1^{\tau},\bm{x}_2^{\tau},\bm{z}^{\tau})
			\!+\! \sum_{m} V_m(\bm{y}^{\tau}_{1m}, \bm{y}^{\tau}_{2m}, \bm{z}^{\tau}_{m}) 
			\!+\! h(\boldsymbol{\pi^{\tau}},\bm{x^{\tau},y^{\tau},z^{\tau},v^{\tau}})$;
		}
		\eIf{$h(\boldsymbol{\pi}^{\tau},\bm{x^{\tau},y^{\tau},z^{\tau},v^{\tau}})<\infty$ }
		{$\mathcal{C}^{\tau+1}_O = \mathcal{C}^{\tau}_O \cup \{ \boldsymbol{\pi}^m \} $ \ \% add extreme point;}
		{
			$\mathcal{C}^{\tau+1}_F = \mathcal{C}^{\tau}_F \cup \{ \boldsymbol{\pi}^m \} $. \ \% add extreme ray;
		}
		$ \tau = \tau + 1$. \;
	}
	Optimal configuration, assignment, deployment:
	$\bm{x^*} = \bm{x}^{\tau}; \bm{y^*} = \bm{y}^{\tau}, \bm{z^*} = \bm{z}^{\tau} $  \;
	Optimal routing
	$\bm{r}^*$ $\text{from}$ $\mathbb P_{SD}(\bm{x^{\tau},y^{\tau},z^{\tau},v^{\tau}})$ \;
\end{algorithm}

Algorithm 1 finds, provably, the solution of $\mathbb{P}_2$ (exact if we set $\epsilon=0$.), as the following theorem states.
\begin{theorem}
Algorithm 1 converges in a finite number of iterations to the optimal solution of problem (P2).
\end{theorem}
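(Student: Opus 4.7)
The plan is to apply the classical convergence argument for Benders' decomposition (see \cite{benders}), adapted to the specific structure of $\mathbb{P}_2$. The first step is to observe that, once the integer/binary tuple $(\bm{x},\bm{y},\bm{z},\bm{v})$ is fixed, the residual problem $\mathbb{P}_{2S}$ is a linear program in the continuous variable $\bm{r}\succeq \bm{0}$, because all remaining constraints \eqref{eq:link_capacity},\eqref{eq:route_assign2}--\eqref{eq:delay3} and the equality $\sum_{p_k\in\mathcal{P}_{nm}} r^k_{nm}=S_{nm}(\bar v_{1nm},\bar v_{2nm})$ are linear in $\bm{r}$, and the routing contribution to $J_F$ is linear. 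Hence its dual $\mathbb{P}_{2SD}$ is also an LP, and its feasible region $\{\bm{\pi}: H^\top\bm{\pi}\preceq\bm{\zeta}\}$ is a polyhedron that does \emph{not} depend on the master iterates; only the objective $h(\bm{\pi},\cdot)$ does. Therefore $\mathbb{P}_{2SD}$ admits only finitely many extreme points (collected in $\mathcal{C}_O$) and finitely many extreme rays (collected in $\mathcal{C}_F$).

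Next I would establish the monotonicity and validity of the bounds produced by the algorithm. Because the set of cuts in $\mathbb{P}_{2M}$ only grows across iterations, its feasible region shrinks, and therefore $LB^{(\tau)}$ is non-decreasing in $\tau$; furthermore, $LB^{(\tau)}\le \mathrm{opt}(\mathbb{P}_2)$ at every iteration, since the cuts \eqref{eq:h1}--\eqref{eq:h2} are valid inequalities implied by duality (an optimality cut is the dual bound at an extreme point, a feasibility cut eliminates primal-infeasible integer configurations). Symmetrically, $UB^{(\tau)}$ is non-increasing by construction (Step~5--7) and at every iteration it corresponds to the cost of a feasible solution $(\bm{x}^\tau,\bm{y}^\tau,\bm{z}^\tau,\bm{v}^\tau,\bm{r}^\tau)$ of $\mathbb{P}_2$, hence $UB^{(\tau)}\ge \mathrm{opt}(\mathbb{P}_2)$. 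Together, $LB^{(\tau)}\le\mathrm{opt}(\mathbb{P}_2)\le UB^{(\tau)}$ and the gap is monotonically non-increasing.

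The third step is finite termination. Here I would argue that no extreme point or extreme ray can be generated twice. If at some iteration $\tau$ the dual $\mathbb{P}_{2SD}$ returns an extreme point $\bm\pi^\tau$ already present in $\mathcal{C}_O$, then the corresponding optimality cut was active at the master solution, so $\theta^\tau=h(\bm\pi^\tau,\bm{x}^\tau,\bm{y}^\tau,\bm{z}^\tau,\bm{v}^\tau)$, which together with Steps~3 and~6 yields $UB^{(\tau)}=LB^{(\tau)}$ and the algorithm stops; an analogous argument holds for extreme rays triggering feasibility cuts. Since $\mathbb{P}_{2SD}$ has a finite number of extreme points and extreme rays, the loop can execute at most $|\mathcal{C}_O^{\max}|+|\mathcal{C}_F^{\max}|+1$ times before either $UB^{(\tau)}-LB^{(\tau)}\le\epsilon$ (with $\epsilon=0$, equality) or the cut pool is exhausted, both of which terminate the algorithm.

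Finally, optimality at termination follows: when $UB^{(\tau)}=LB^{(\tau)}$, the incumbent feasible solution attains the (valid) lower bound on $\mathrm{opt}(\mathbb{P}_2)$, and by the equivalence of $\mathbb{P}_2$ and $\mathbb{P}_1$ established in Section~\ref{sec:vran}, this is also optimal for the original problem $\mathbb{P}_1$. The step I expect to require the most care is the ``no cut repeated without termination'' argument, since it hinges on the fact that the optimality cut added at iteration $\tau$ is binding at the master iterate and forces $\theta\ge$ the true slave optimum, closing the gap; I would give a short lemma making this precise and handling the unbounded (feasibility-cut) case symmetrically via the Farkas certificate.
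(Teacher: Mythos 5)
Your proof is correct, but it takes a genuinely different route from the paper. The paper does not give the classical extreme-point counting argument at all: it invokes Theorem~6.3.4 of \cite{mixed-integer} (the finite-convergence theorem for \emph{generalized} Benders decomposition) and simply verifies its hypotheses --- that the continuous variable $\bm{r}$ ranges over a non-empty, convex, compact set, that the objective and constraints are convex (indeed linear) in $\bm{r}$ for fixed binaries, that all parameters are finite, and that a constraint qualification holds so optimal Lagrange multipliers exist. You instead exploit the fact that the slave $\mathbb{P}_{2S}$ is an LP whose dual feasible polyhedron $\{\bm{\pi}: H^\top\bm{\pi}\preceq\bm{\zeta}\}$ is independent of the master iterate, so it has finitely many extreme points and rays, no cut can be generated twice without closing the gap, and the bounds sandwich $\mathrm{opt}(\mathbb{P}_2)$ monotonically. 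Your argument is more self-contained and, because the subproblem here really is linear, it applies directly without appealing to the heavier generalized-Benders machinery; it also makes explicit \emph{why} the cut pool is finite, which the paper leaves implicit in the citation. The paper's approach buys brevity and would extend unchanged if the routing cost or capacity constraints were replaced by convex nonlinear ones (a generalization the conclusion explicitly mentions), whereas your extreme-point argument would then need the Geoffrion-style treatment anyway. One small point of care in your write-up: at a repeated optimality cut you should argue $\theta^\tau \geq h(\bm{\pi}^\tau,\bm{x}^\tau,\bm{y}^\tau,\bm{z}^\tau,\bm{v}^\tau)$ from the cut being \emph{satisfied} (not necessarily active), which already gives $LB^{(\tau)}\geq UB^{(\tau)}$ and hence termination; the equality you assert is not needed and not always true.
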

\begin{proof}	
We prove that the conditions of Theorem 6.3.4 in \cite{mixed-integer} hold for our case.
\begin{enumerate}
	\item $\bm{r}$ is a non-empty and convex set since the domain of every $r_{nm}^{k}$ is in $\mathbb{R}_{+}$. 
	\item $J_F(\bm{r}, \bm{\bar x}, \bm{\bar y}, \bm{\bar v})$ and (\ref{eq:delay1})-(\ref{eq:delay3}) are convex; and (\ref{eq:route_assign})-(\ref{eq:route_assign2}) are linear.
	\item $\bm{r}$ is closed and bounded by (\ref{eq:link_capacity}), and each $r_{nm}^{k}$ has a real value (RU-$n$ traffic) in $\mathbb{R}_{+}$, so the constrains are also continuous for every fixed binary variable.
	\item We have finite number for all parameters and $r$ is bounded by (\ref{eq:link_capacity}), so the solution is also finite. The inequalities and equalities are set at least there is a feasible solution, so, the first-order constraint qualification also holds for the condition and there exist optimal Langrangian multipliers. 
\end{enumerate}
Hence, all conditions of the theorem are satisfied and the convergence does not depend on the initialization.
\end{proof}

\section{Results and Discussion}

In this section we present a battery of numerical tests using real datasets and topologies, aiming to examine the:
\begin{itemize}[leftmargin=4mm]
	\item optimal multi-CUs vRAN configurations in real networks;
	\item benefits of deploying multiple CUs instead of only one;
	\item cost-effectiveness of optimizing the CU locations and DU-CU assignments, compared to non-optimized deployments.
	\item the effect of routing cost and DU traffic on the cost and centralization level for a single and multiple CUs. 
\end{itemize}

\subsection{Network Topology and Evaluation Setup}
We evaluate our model in the actual networks taken from \cite{fluidran_andrea} of Fig. \ref{fig:topology}. RAN N1 consists of a core node, 198 DUs, 15 CU candidate locations, and routers. RAN N2 has a core node, 197 DUs, 15 CU candidate locations, and routers. The network parameters such as delay, location, distance, and link capacities are derived from the actual data or using other measurement studies. We pre-calculate the DU-CU candidate paths by applying the $k$-th shortest path algorithm \cite{kroute}. The distance of DUs to the CUs and the core network ranges from 0.1km to 25km and the respective path delays vary up to 257.61$\mu$s (N1) and 1152.69 $\mu$s (N2). As candidate locations for deploying CUs, we have selected the network nodes with the highest network degree (more central locations)\footnote{These networks do not have CUs, thus we followed this intuitive approach to select candidate locations that we then feed to our optimization framework.}. 


We set the system parameters according to actual testbed measurements and previous works \cite{cost_vm, mec_andrea, fluidran_andrea}.  The default DU load is $\lambda_n=150$ Mbps for each DU\footnote{This value corresponds to 2$\times$2 MIMO, 1 user/TTI, 20 Mhz, 2 TBs of 75376 bits/subframe and IP MTU 1500B.}. For CPU capacity, we use a \textit{reference core} (RC), Intel i7-4770 3.4GHz, and set the maximum computing capacity to 75 RCs for each CU, and to 2 RCs for each DU. We assume that the default cost of CU-$m$ VM instantiation is a half\footnote{This is based on the typical cost of C-RAN BSs that is half of D-RAN BSs, both for macro (\$50K and \$25K) and micro BS(\$20K and \$10K)}. of DU-$n$ ($a_m = \alpha_n/2$) \cite{cost_vm, mec_andrea, fluidran_andrea} but we also explore the impact of different ratios. The CU processing cost is set to $b_m\!=\!0.017 \beta_n$ according to our measurements in \cite{mec_andrea}. Then, the CU deployment cost is assumed at least 30 times higher of the CU processing cost. This value is calculated based on the comparison server setup price (\$20K) and data processing cost (\$653.54) \cite{cost_vm}. Finally, the routing cost per path grows linearly with distance, $\zeta= c_d \times d_{km}$, where $c_d$ is the cost per Km and captures how expensive is each link (can be different for each network).

\begin{figure}[t!]
	\centering
	\begin{subfigure}[b]{0.24\textwidth}
		\includegraphics[width=\textwidth]{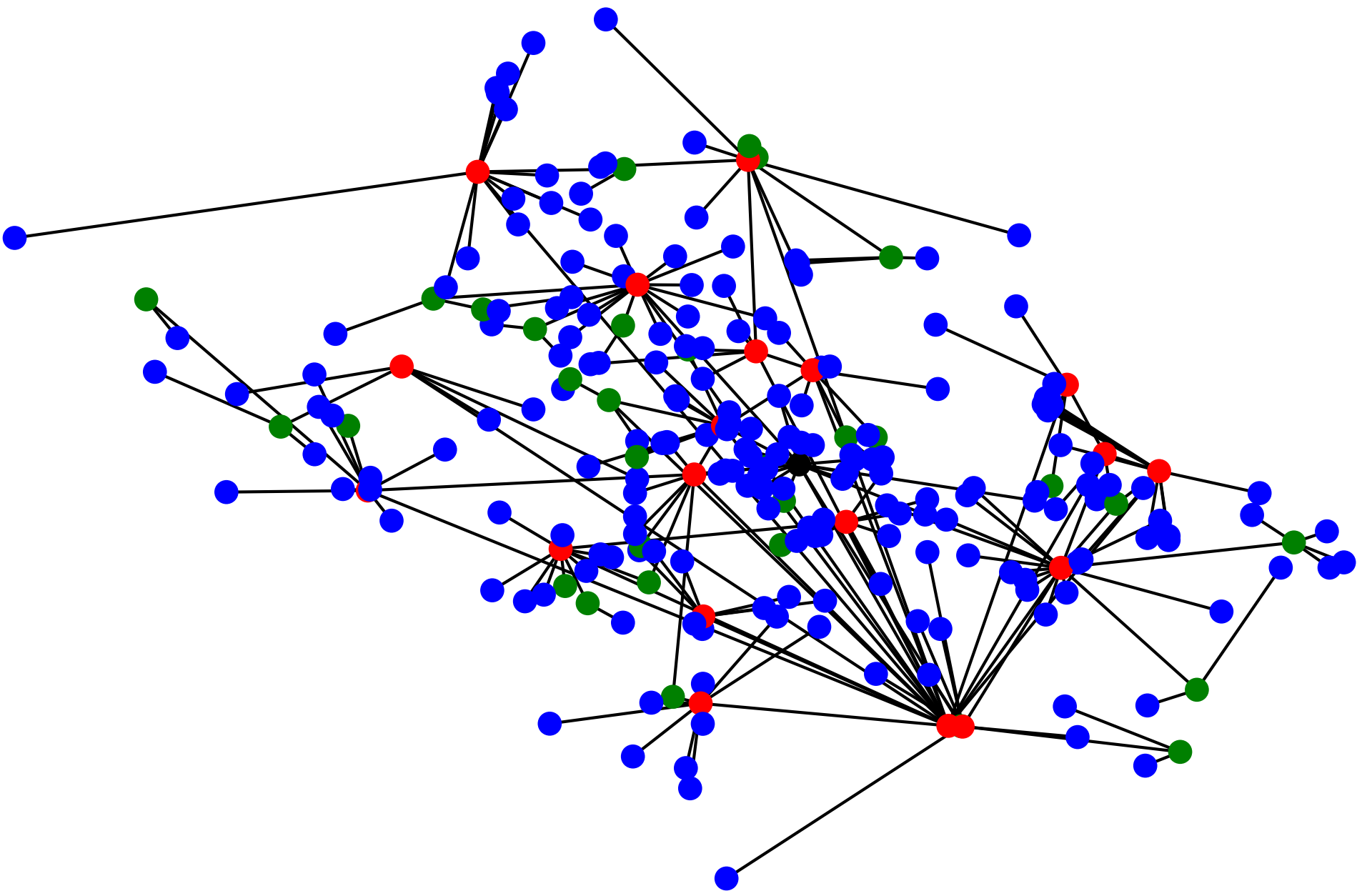}   
		\caption{N1}
		\label{fig:topology_a}
	\end{subfigure}
	\begin{subfigure}[b]{0.241\textwidth}
		\includegraphics[width=\textwidth]{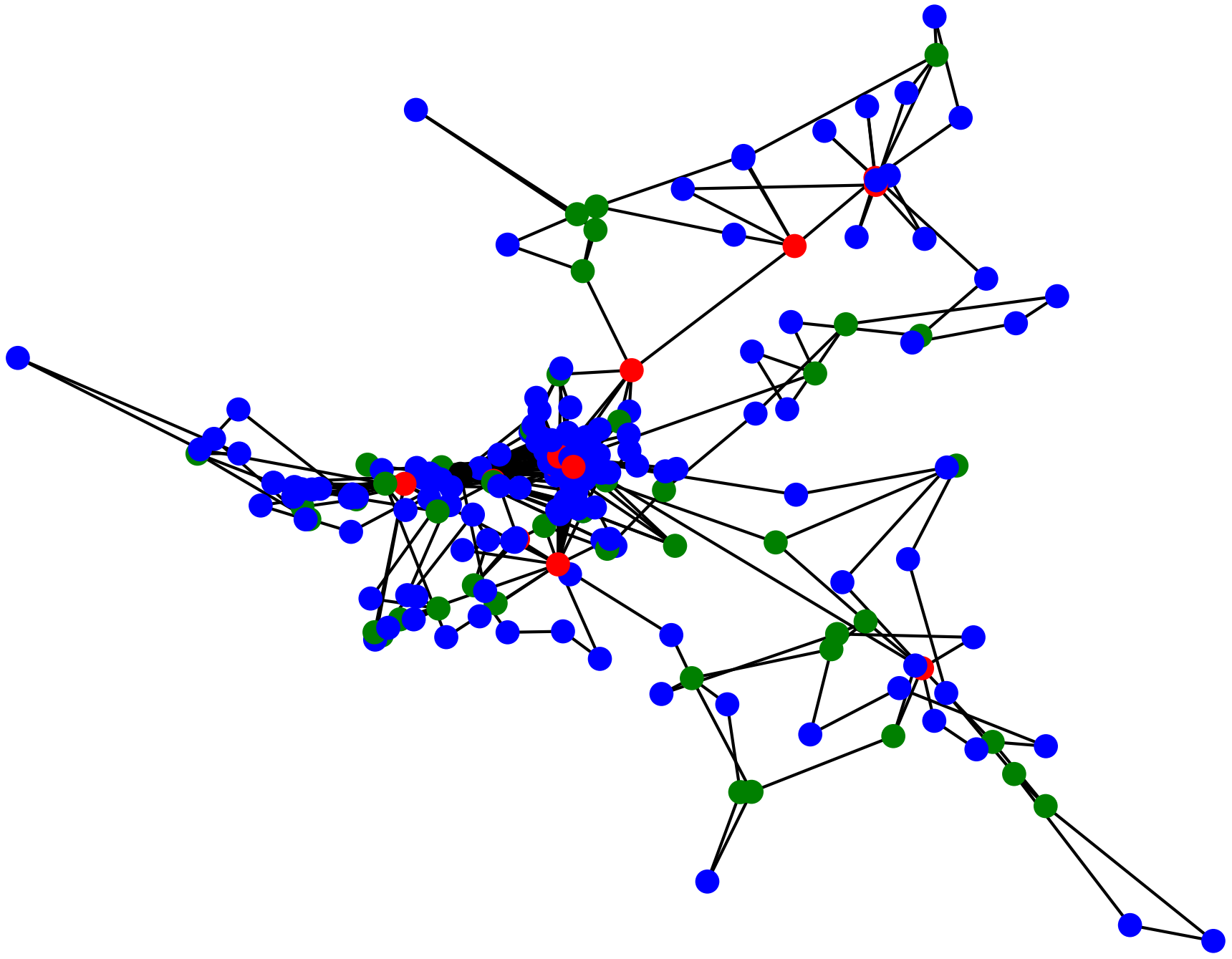}   
		\caption{N2}
		\label{fig:topology_b}
	\end{subfigure}	
	\caption{Two actual operational RANs \cite{fluidran_andrea}. Black, blue, green, and red color dots represent the core network, DUs, routers, and CU's candidates, respectively. The RANs are visualized according to its coordinate location (longitude and latitude).}
	\label{fig:topology}
\end{figure}  

\subsection{Increasing the CU candidate locations}
%
\begin{figure*}[t!] 
	\centering
		\label{fig:num_cu}
	\begin{subfigure}[c]{0.24\textwidth}
		\centering
		\includegraphics[width=\textwidth]{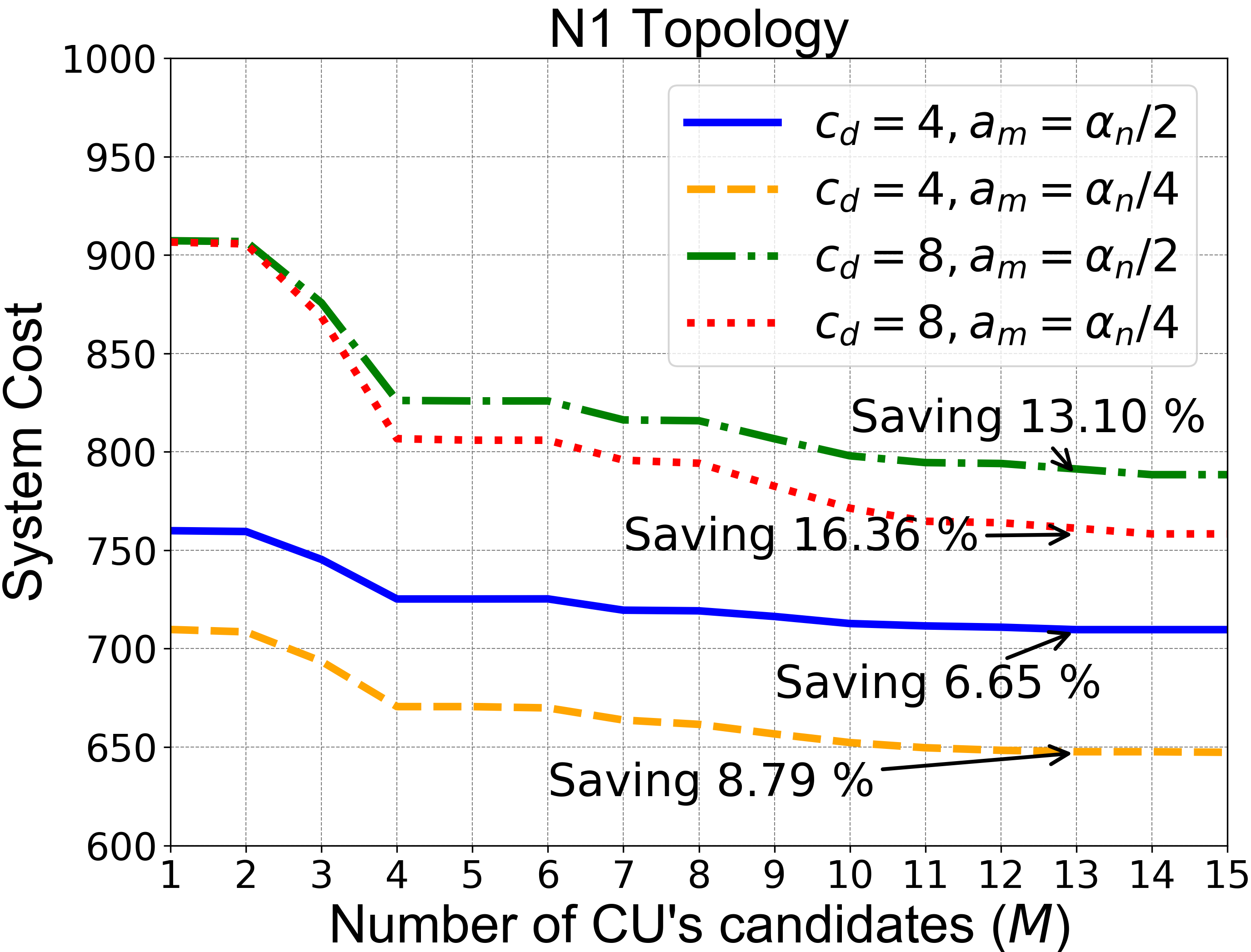}   
		\caption{}
		\label{fig:numofcu_a}
	\end{subfigure}
	\hfill
	\begin{subfigure}[c]{0.24\textwidth}
		\centering
		\includegraphics[width=\textwidth]{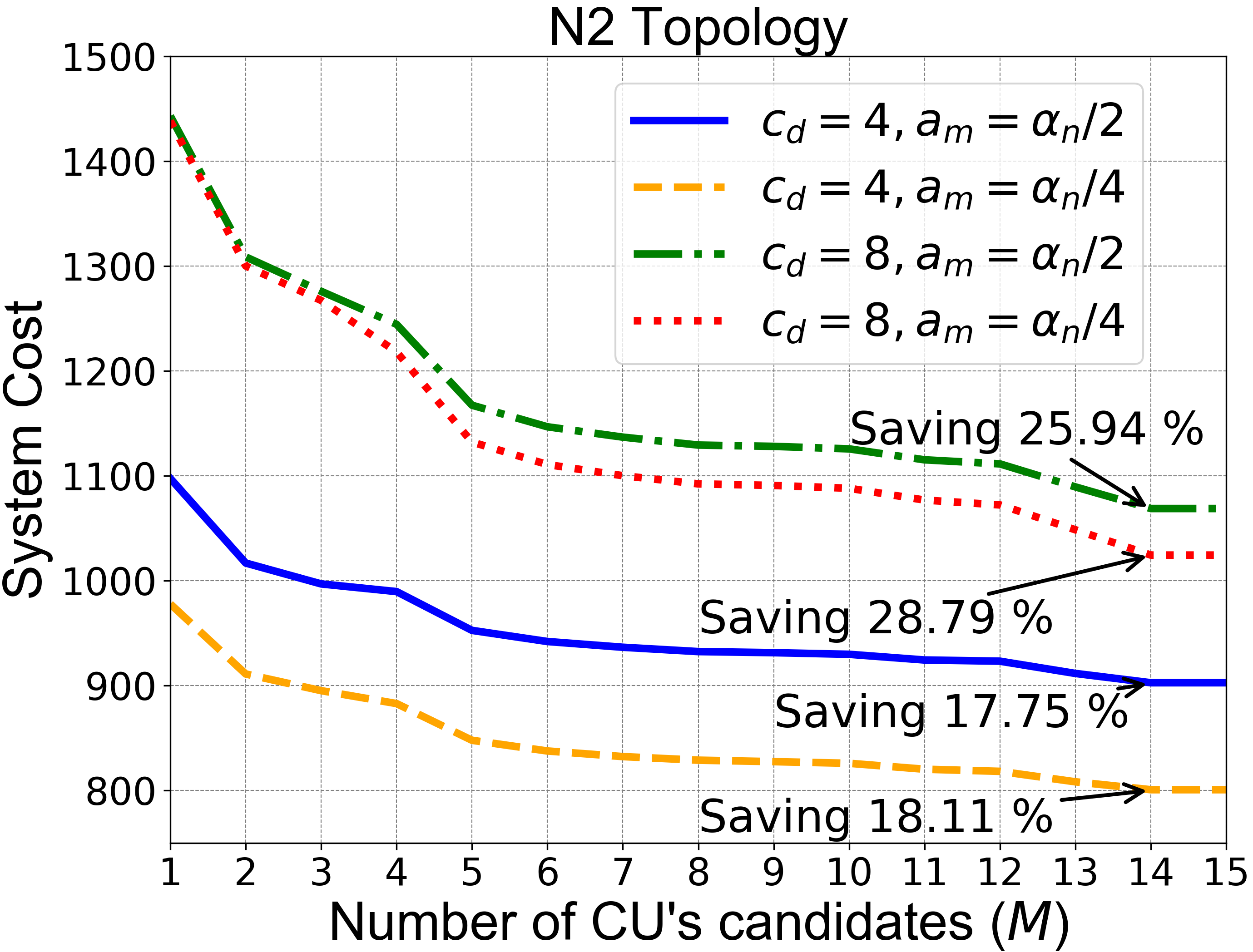}   
		\caption{}
		\label{fig:numofcu_b}
	\end{subfigure}
	\hfill
	\begin{subfigure}[c]{0.24\textwidth}
		\includegraphics[width=\textwidth]{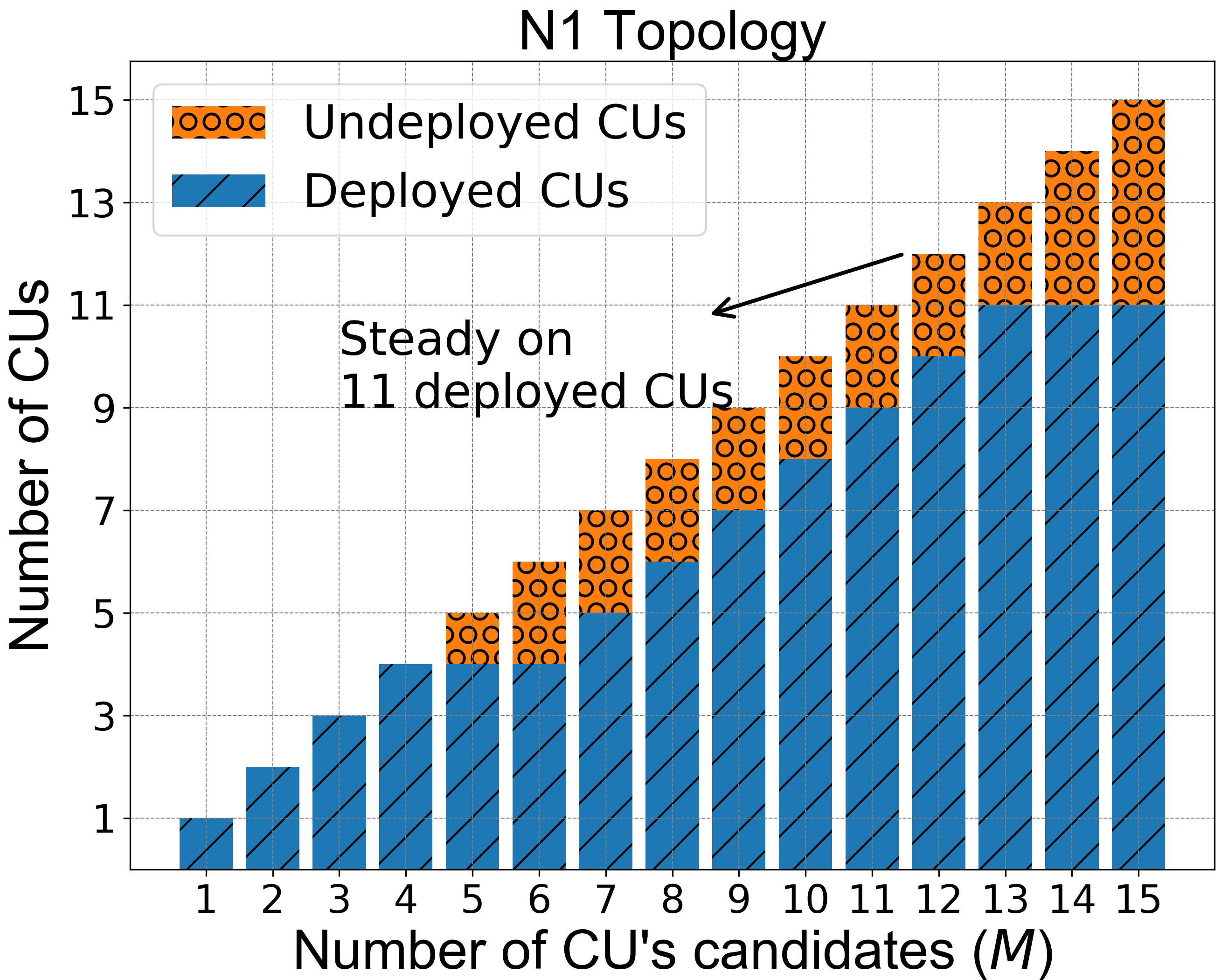}   
		\caption{}
		\label{fig:numofcu_c}
	\end{subfigure}
	\hfill
	\begin{subfigure}[c]{0.24\textwidth}
		\includegraphics[width=\textwidth]{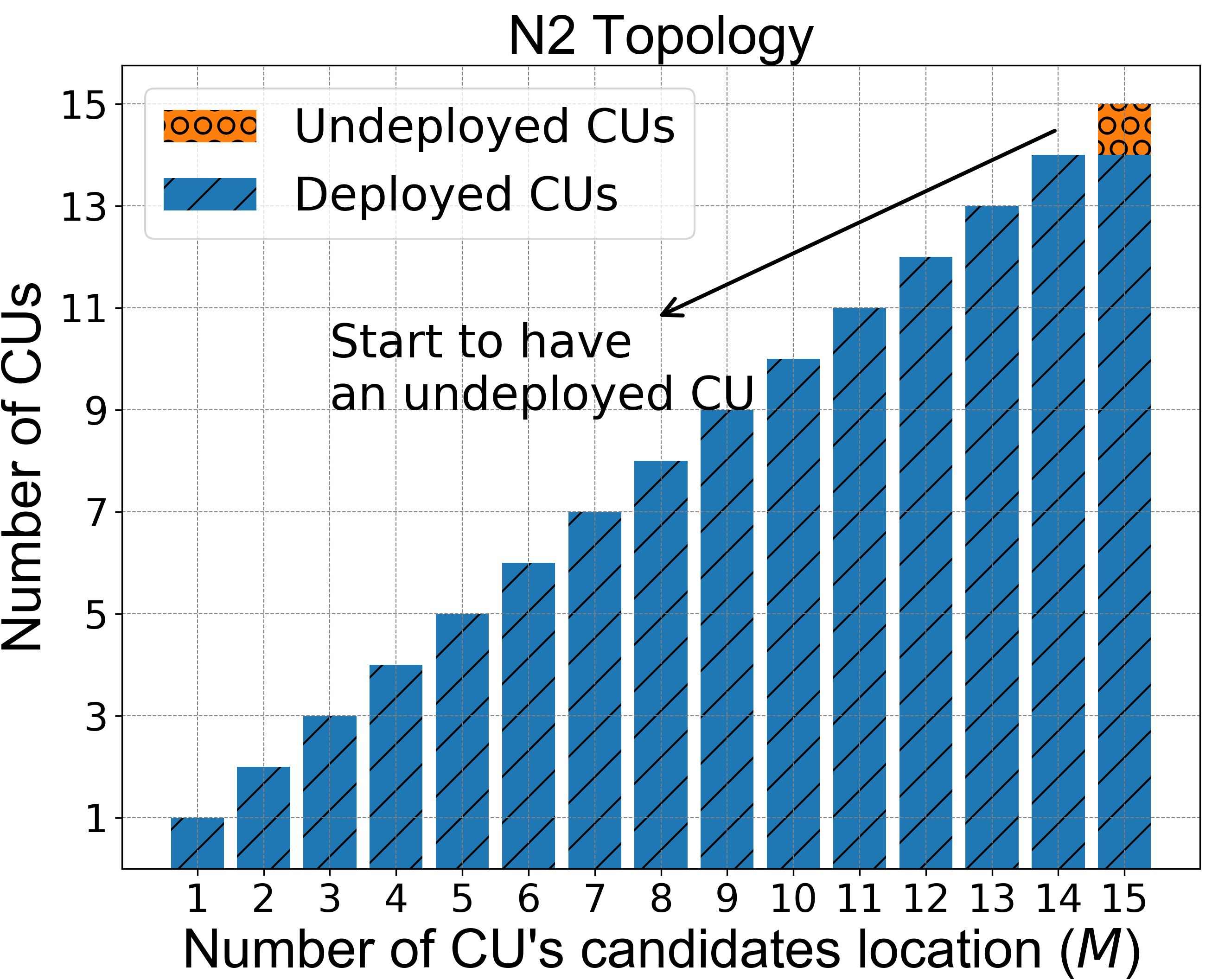}   
		\caption{}
		\label{fig:numofcu_d}
	\end{subfigure}
	\caption{Impact of increasing locations to: (a-b) system cost; (b-c) the number of actually deployed CUs.}
\end{figure*} 
\begin{figure*}[t!] 
	\centering
	\label{fig:transport}
	\begin{subfigure}[c]{0.24\textwidth}
		\centering
		\includegraphics[width=\textwidth]{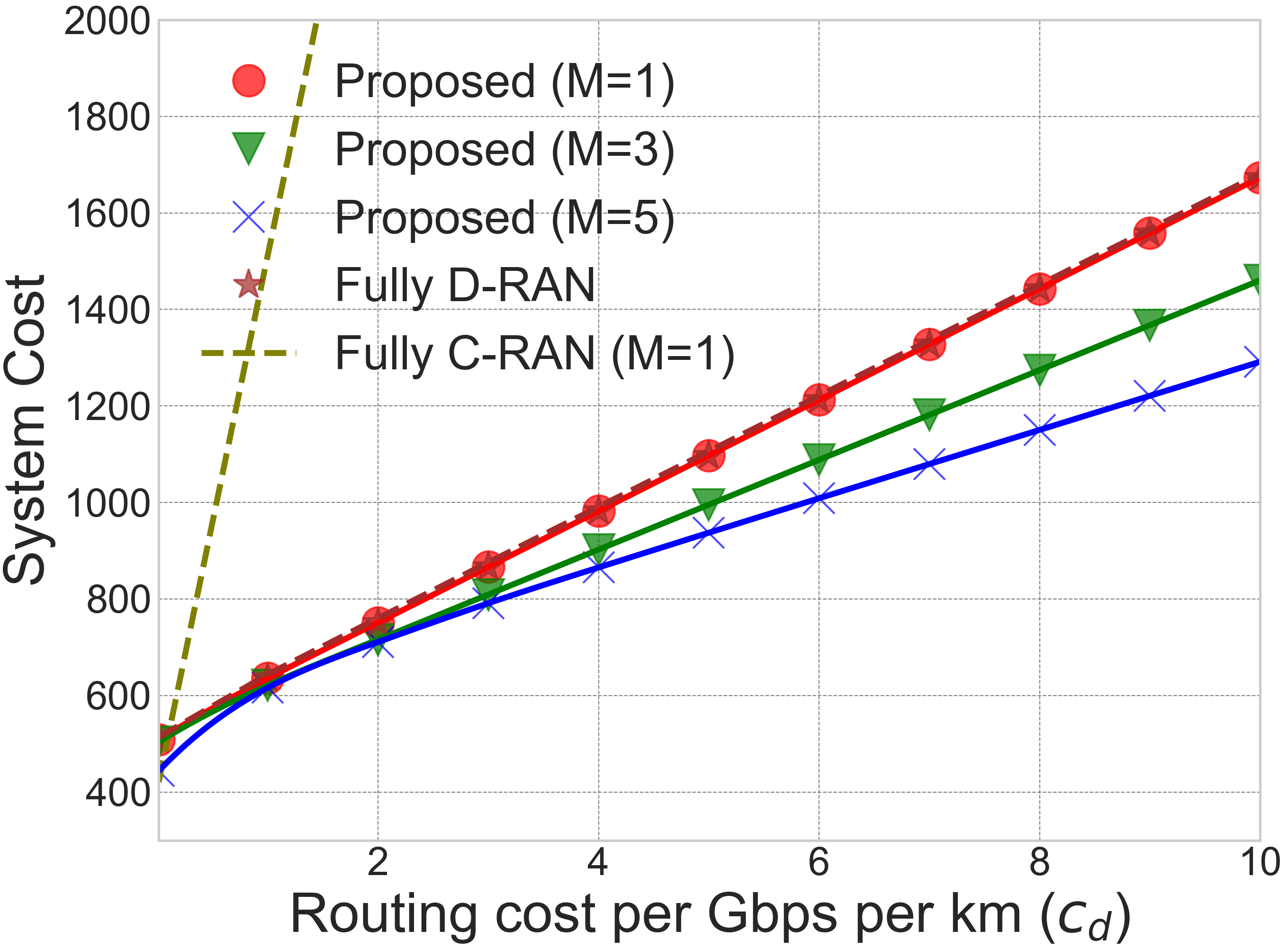}
		\caption{}
		\label{fig:transport1}
	\end{subfigure}
	\hfill
	\begin{subfigure}[c]{0.24\textwidth}
		\centering
		\includegraphics[width= \textwidth]{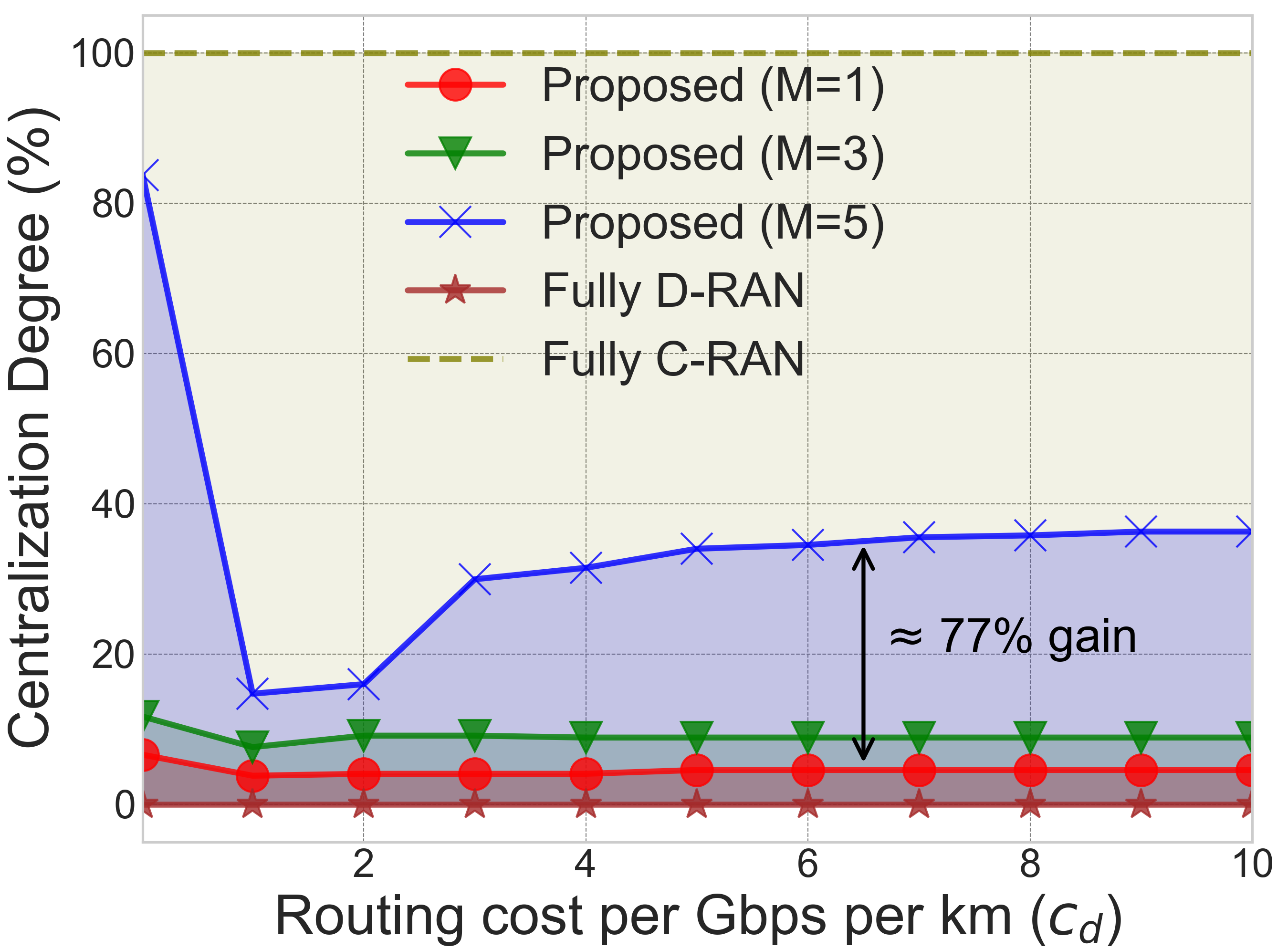}
		\caption{}
		\label{fig:transport2}
	\end{subfigure}
	\hfill
	\begin{subfigure}{0.24\textwidth}
		\centering
		\includegraphics[width=\textwidth]{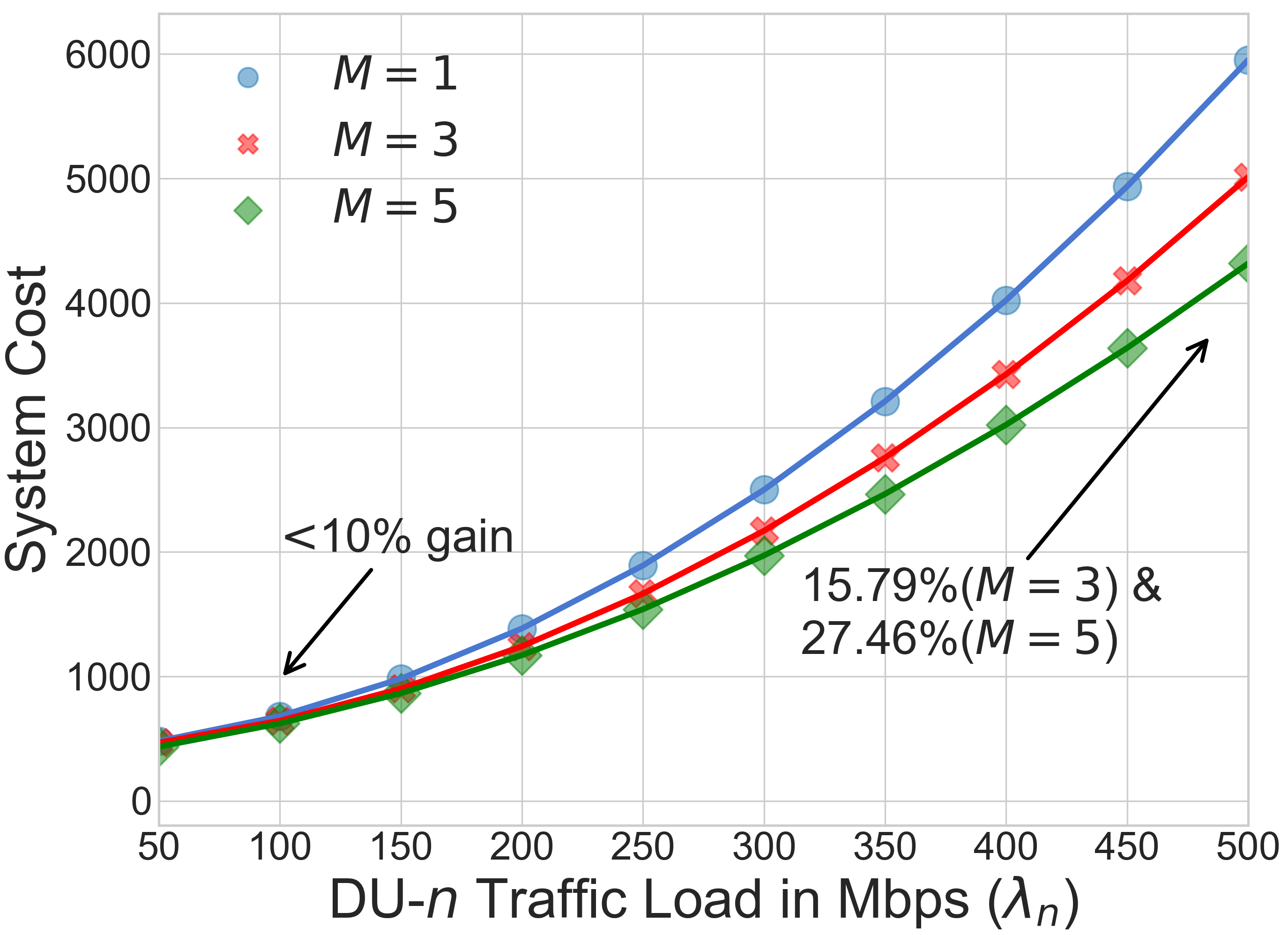}   
		\caption{}
		\label{fig:sigma_b} 
	\end{subfigure}
	\hfill
	\begin{subfigure}[c]{0.24\textwidth}
		\includegraphics[width=\textwidth]{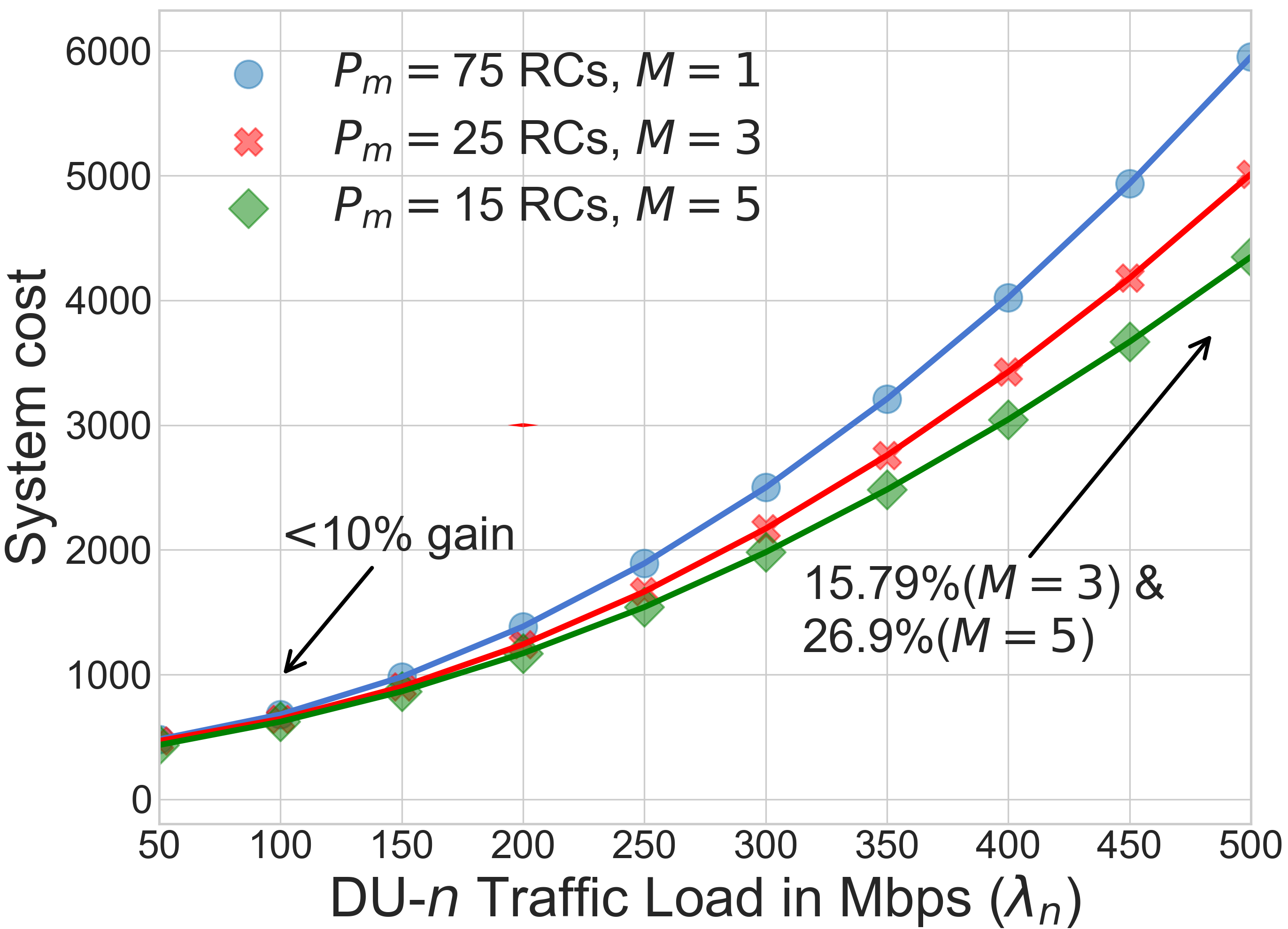}   
		\caption{}
		\label{fig:sigma_a}
	\end{subfigure}
	\caption{Impact of the routing cost to (a) system cost; (b) centralization degree. (c)-(d) Impact DU traffic load to system cost.}
\end{figure*} 

Our first experiment increases gradually the available CU locations and studies how this impacts the system cost (in monetary units) and the RAN design. The deployment cost per CU is higher as we add more locations, i.e., $\omega_m \leq \omega_{m+1}$, which is intuitive as, naturally, the cheapest locations are selected first. Fig. \ref{fig:numofcu_a} shows that there are $6.65\%, 8.79\%, 13.10\%$, and $16.36\%$ cost savings on average by adding more candidate location ($M\!=\!13$) compared to just a single location in N1, for various scenarios of routing cost and CU/DU computing efficiency gains. Similar findings hold for N2, Fig. \ref{fig:numofcu_b}. The trend of cost reduction continues as the candidate CU locations increase, albeit the additional gains are fast decreasing and we reach a state (after $M\!=\!13$ in this network) where adding more CU locations does not bring further cost savings.  

Figures \ref{fig:numofcu_c}-\ref{fig:numofcu_d} show the distribution of deployed CUs and non-deployed ones, i.e., available locations that were not used by our solution. In N1, the deployment rate (portion of available CU locations that are actually deployed) is $100\%$ when the candidate locations are up to $M\!=\!4$. Beyond that, the deployment decreases and reaches a constant value with 11 deployed CUs when the available locations are $M\!=\!13$. However, in N2, we have non-deployed CUs after $M\!=\!14$ which shows that, despite following a similar trend, the deployment decision is profoundly affected by the specific candidate locations and network structure. These results are representative as we have used real networks. 

\textbf{Findings:} \textit{1) As the CU available locations increase, our framework saves significant costs (16.36$\%$ in N1 and 28.79$\%$ in N2), before the gains diminish. 2) The higher routing costs and lower compute costs of CUs lead to substantial benefits. 3) Network characteristics affect the number of used CUs.}



\subsection{Impact of Routing Cost and Traffic}


%
%
We study the effect of routing cost on the system cost and the centralization that our solution achieves i.e., number of functions deployed at the CUs. Aside from comparing single and multiple CU candidate locations, we also benchmark against C-RAN and D-RAN solutions (two extreme cases). In fully C-RAN, all functions except $f_0$ are placed at the CU, while in D-RAN they are all deployed at the DU. In this case, the deployment cost of CU is set to zero ($\omega_m =0$) for D-RAN (we do not need to deploy any CU), and the traffic load of each DU is directly routed to the core network. In this experiment, the routing cost (/Km) increases from $c_d=0.01$ (very low) to $c_d=10$ (very high). Note that networks N1 and N2 cannot support fully C-RAN solutions, hence the results for this configuration are hypothetical and presented for reference. 

Fig. \ref{fig:transport1} compares the system cost of our framework to fully C-RAN and D-RAN. The C-RAN cost increases significantly with the routing cost $c_d$, and eventually exceeds the cost of D-RAN (for $c_d \approx 0.01$ per Gbps). For a single CU location and the specific network parameters\footnote{This result is affected by, e.g., the cost of deploying a CU, the distance of this location compared to the core, etc. In general, even a single CU location can significantly reduce costs, see \cite{fluidran_andrea} for examples.}, the optimized configuration is just slightly better than D-RAN ($0-1\%$) because the CU location is near to the core and far from the DUs, thus most functions cannot be deployed in CU, Fig. \ref{fig:transport2}. By considering more candidate locations we can obtain approximately $13.10\%$ ($M=3$) and $23.15\%$ ($M=5$) cost savings at $c_d = 10$. Fig. \ref{fig:transport2} shows the impact of the routing cost to function centralization\footnote{Percentage of all BS functions that are deployed at CUs; it is 100\% for fully C-RAN and 0$\%$ for D-RAN.}. Overall, with $M=5$ locations we can maintain the centralization compared to the lower candidates. The result shows around $6.2-72.09\%$ and $10.9-77.16\%$ centralization gains at routing cost $0.01-10$ per Gbps/Km distance. 
%

Next, we evaluate the impact of the DU-$n$ traffic load on the system cost. We consider two scenarios. Firstly, the computational capacity of each CU is set with the same value, 75 RCs. Secondly, the aggregate of all CUs capacity
is set to be $P_{tot}$ = 75 RCs. Consequently, every CU has different capacity, i.e., $P_m = P_{tot}/M$, in each scenario as $M$ changes. Our goal here is to understand which design option is preferable: to have a single CU location with very high capacity server, or multiple smaller CUs with lower capacity? We use the same deployment and VM computational cost for both scenarios. Fig. \ref{fig:sigma_b} clearly shows that the multiple CUs scenario performs better. Indeed, we can gain around $15.79\%$ ($M=3$) and $27.46\%$ ($M=5$) cost reduction at 500Mbps, when we consider three ($M=3$) and five CU locations ($M=5$) compared to a single CU (total computing capacity is the same in both cases). The gap is getting higher as the traffic load increases. In addition, Fig. \ref{fig:sigma_a} shows that having multiple candidate locations yields extra saving of $15.79\%$ ($M=3$) and $26.9\%$ ($M=5$) compared to a single candidate, even when the capacity of servers is smaller.

\textbf{Findings:} \textit{1) The cost-savings of having multiple CUs is getting higher in line with the increase of the routing cost and the traffic load. 2) More CU locations can maintain a higher centralization degree (77$\%$ gain). 3) Having multiple CUs with lower capacity can achieve extra cost-savings (26.9$\%$) than a single CU with the same aggregate capacity.}


%
\begin{figure}[t]
	\centering
	\includegraphics[width=0.375\textwidth]{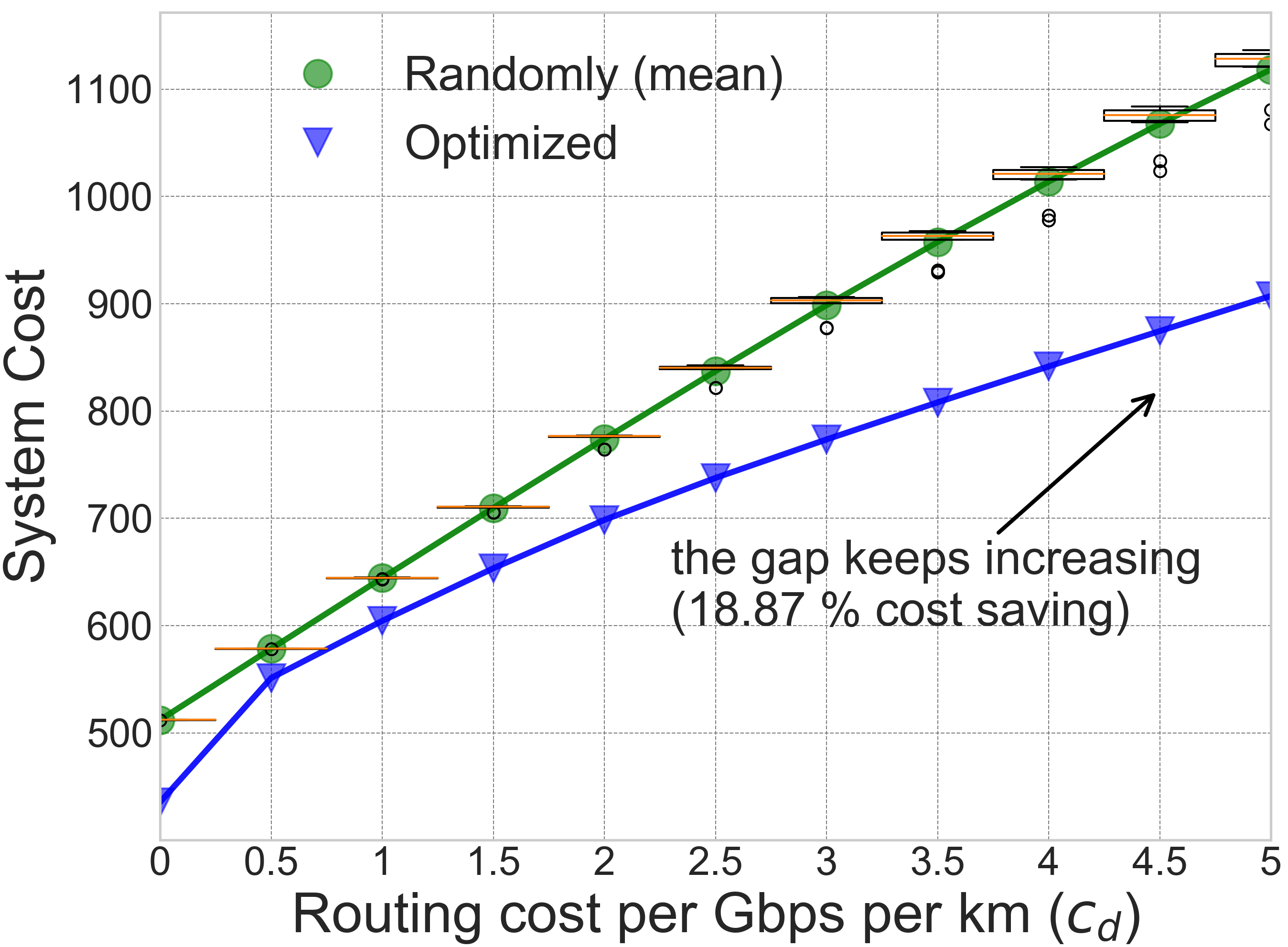}   
	\caption{The comparison of randomly and optimized (proposed) deployment as the routing cost increases.}
	\label{fig:opt} 
\end{figure}

\subsection{Random vs Optimized Deployment of CUs}
Finally, we compare the case where the CUs are deployed randomly instead of optimizing these decisions (as our algorithm does). We will show that it is crucial not only to use multiple CUs but to carefully optimize their placement. The first approach deploys the CUs in random locations (among those being available), while the second uses our framework and optimizes their position. In detail, we set the number of available CU positions to $M\!=\!15$ and force to choose only 3 deployed CUs, by selecting randomly (10 iterations) or by optimizing the deployment. Fig. \ref{fig:opt} clearly shows that our framework outperforms the randomly deployed scenario by 18.87$\%$. These cost savings increase with the routing cost. Clearly, when it is more expensive to route data, such non-optimized random decisions will have higher negative impact. 

\textbf{Finding:} \textit{The gains of multi-CUs vRAN cannot be maximized unless their number and location is optimized (18.87$\%$ gain at $c_d=5$) for the specific network and load.}




\section{Conclusion} \label{sec:conclusion}
There is currently a flurry of standardization and other industry activities, aiming to make vRAN the de facto solution for next generation access networks. To this end, our work fills an important gap as it tackles the vRAN design problem in its most general form. Using a standards-compatible system model, we develop a rigorous optimization approach that selects jointly the number and location of (multiple) CUs; assigns to them the DUs; and finds the optimal split levels and routing paths for each flow. Our framework is general and can be readily tailored to different scenarios, e.g., when the operator needs to enforce some level of centralization or when the cost functions are convex non-linear.


\bibliographystyle{IEEEtran}
\bibliography{IEEEabrv,ref}



\end{document}